\newtheorem{proposition}{Proposition}
\newtheorem{theorem}{Theorem}
\theoremstyle{remark}
\newtheorem{remark}{Remark}
\newtheorem{example}{Example}
\newcommand{\reals}{\mathbb{R}}
\newcommand{\hypset}{\mathcal{H}}
\newcommand{\rejset}{R}
\newcommand{\Rcal}{\mathcal{R}}
\newcommand{\Cclosed}{\overline{\mathcal{C}}}
\newcommand{\FDP}{\textnormal{FDP}}
\newcommand{\FDR}{\textnormal{FDR}}
\newcommand{\eBH}{\textnormal{eBH}}
\newcommand{\BY}{\textnormal{BY}}
\newcommand{\textcBY}{\overline{\text{BY}}}
\newcommand{\cBY}{\overline{\textnormal{BY}}}
\newcommand{\SC}{\textnormal{SC}}
\newcommand{\CeBH}{{\overline{\textnormal{eBH}}}}
\newcommand{\textCeBH}{{\overline{\text{eBH}}}}
\newcommand{\ind}{\mathbf{1}}
\newcommand{\F}{\text{F}}
\newcommand{\expect}{\mathbb{E}}
\newcommand{\Hcal}{\mathcal{H}}
\title{Bringing closure to FDR control:\\ beating the e-Benjamini-Hochberg procedure}
\author{
Ziyu Xu\thanks{Department of Statistics and Data Science, Carnegie Mellon University, USA. Email: \texttt{xzy@cmu.edu}.}
\and
Lasse Fischer\thanks{Competence Center for Clinical Trials Bremen, University of Bremen, Germany. Email: \texttt{fischer1@uni-bremen.de}.}
\and
Aaditya Ramdas\thanks{Department of Statistics and Data Science and Machine Learning Department, Carnegie Mellon University, USA. Email: \texttt{aramdas@cmu.edu}.}
}
\begin{document}
\footnotetext{This paper is subsumed by the merged work of \citet{xu_bringing_closure_2025a}.\newline}

\maketitle

\begin{abstract}
    False discovery rate (FDR) has been a key metric for error control in multiple hypothesis testing, and many methods have developed for FDR control across a diverse cross-section of settings and applications. We develop a closure principle for all FDR controlling procedures, i.e., we provide a characterization based on e-values for all admissible FDR controlling procedures. A general version of this closure principle can recover any multiple testing error metric and allows one to choose the error metric post-hoc. We leverage this idea to formulate the closed eBH procedure, a (usually strict) improvement over the eBH procedure for FDR control when provided with e-values. This also yields a closed BY ($\textcBY$) procedure that dominates the Benjamini-Yekutieli (BY) procedure for FDR control with arbitrarily dependent p-values, thus proving that the latter is inadmissibile. We demonstrate the practical performance of our new procedures in simulations.
\end{abstract}

\tableofcontents

\section{Introduction}

Multiple testing with p-values, in particular false discovery rate (FDR) control, has been a centerpiece of statistical research in the past three decades since its introduction by \citet{benjamini_controlling_false_1995}.
Recently, \citet{wang_false_discovery_2022} formulated the e-Benjamini-Hochberg (eBH) procedure for FDR control, which uses e-values for each hypothesis and is robust to arbitrary dependence. It is known that by employing external randomization, or additional information about distribution of the e-values or their dependence structure, the eBH procedure can be improved (i.e.\ it can be amended to be more powerful).

In this paper, we will show that one can (often strictly) improve the eBH procedure with no additional information or randomization. We call our procedure the \emph{closed eBH} procedure, denoted $\textCeBH$ for brevity. Further, we establish that \text{any} FDR controlling procedure \emph{must} rely on e-values, and in particular can be written as an instance of $\textCeBH$ acting on some e-values. In the process, we establish a closure principle for FDR control, similar to the results of \citet{marcus_closed_testing_1976} and \citet{sonnemann_vollstaendigkeitssaetze_fuer_1988} for FWER control. This is also complementary to the results of~\cite{ignatiadis_asymptotic_compound_2025} who show that every FDR controlling procedure is an instance of the eBH procedure with \emph{compound} e-values, a fact that we return to later in this work.

\paragraph{Problem setup.} The standard multiple testing problem assumes that there are $K$ null hypotheses $\hypset_1,\dots,\hypset_K$ a scientist wishes to test, some of which are truly null, and one wishes to discover the non-nulls.
Let $\alpha \in [0, 1]$ be our fixed and desired level of control on the FDR.

First, for any set $A \subseteq [K] \coloneqq \{1,\dots,K\}$, define $\hypset_A \coloneqq \cap_{j \in A} \hypset_j$, i.e., the intersection hypotheses of the base hypotheses corresponding to the set $A$.
For any set $A$, if the true null set equals $A$, then
 the false discovery proportion (FDP) of a candidate discovery set $\rejset \subseteq [K]$ equals
\[
\text{FDP}_A(\rejset) = \sum_{j \in A} \frac{\ind\{j \in \rejset\}}{|\rejset| \vee 1} = \frac{|\rejset \cap A|}{|\rejset| \vee 1}.
\]
In words, if $\Hcal_A$ happened to be true (and $\Hcal_B$ is false for every $B \supset A$), and $\rejset$ was rejected, then $\text{FDP}_A(\rejset)$ would be the realized FDP.

Let $\mathbf{X} \in \mathcal{X}$ denote the data, which is sampled from the true distribution $\mathcal{P}^*$.
We define a \emph{discovery procedure} $\Rcal: \mathcal{X} \rightarrow 2^{[K]}$ as a mapping from input data to a discovery set $R \subseteq [K]$, where $\reals^{\geq 0}$ represents the nonnegative reals.
Formally, each base hypothesis $\hypset_j$ is a set of distributions, and $\hypset_j$ is said to be ``truly null'' if $\mathcal P^* \in \hypset_j$. Let $A^* \subseteq [K]$ denote the set of true nulls, i.e., $\mathcal{P}^* \in \hypset_{A^*}$ and $\mathcal{P}^* \not\in \hypset_A$ for all $A$ s.t.\ $A^* \subset A$.
The FDR of a discovery procedure $\Rcal$ equals $$\FDR(\Rcal) \coloneqq \expect[\FDP_{A^*}(\Rcal(\mathbf{X}))].$$

\paragraph{The eBH procedure.}
We consider the situation where our data has already been summarized into $K$ e-values $\mathbf{X} = \mathbf{E} \coloneqq (E_1,\dots,E_K)$ (one per hypothesis). These e-values may be arbitrarily dependent.
\citet{wang_false_discovery_2022} introduced the eBH procedure which takes in ensures FDR control below $\alpha$ under arbitrary dependence among the e-values, and is defined as producing the following discovery set:
\begin{align}
    k^\eBH &\coloneqq \max\ \left\{k \in \{0\} \cup[K]: \sum_{i \in [K]} \ind\left\{E_i \geq \frac{K}{\alpha k} \right\}\geq k\right\}\\
\rejset^{\eBH} &\coloneqq \Rcal^\eBH(\mathbf{E}) \coloneqq  \left\{i \in [K]: E_i \geq \frac{K}{\alpha k^\eBH}\right\}
\end{align}
The eBH procedure (and variants) has been used to derive FDR control in many settings with complex dependencies in the data, e.g., bandit data collection \citep{xu_unified_framework_2021}, multi-stream monitoring \citep{wang_anytime-valid_fdr_2025,dandapanthula_multiple_testing_2025,tavyrikov_carefree_multiple_2025}, multi-resolution testing \citep{gablenz_catch_me_2025,yu_generalized_e-value_2025}, combining data from multiple sources \citep{banerjee_harnessing_collective_2024,li_note_e-values_2025}, etc. Prior work has shown that existing FDR controlling methodologies can be recast as an application of the eBH procedure. Examples include knockoffs \citep{ren_derandomised_knockoffs_2023,guan_one-at-a-time_knockoffs_2025a} as well as the seminal procedure of \citet{benjamini_controlling_false_1995} \citep{li_note_e-values_2025} and its variant introduced by \citet{benjamini_control_false_2001} that is robust to arbitrary dependence \citep{wang_false_discovery_2022}. We will show that the eBH procedure is actually inadmissible, and derive a procedure that (usually strictly) improves upon eBH that is applicable to and may improve power for all of the aforementioned situations.

\paragraph{Contributions.}
Our contributions are listed below.
\begin{itemize}
    \item Development of a general closure principle that encompasses all expectation based error rates and allows to construct all procedures controlling these metrics (Theorem~\ref{theo:CP_general}). This characterizes multiple testing procedures by single e-values for the intersection hypotheses and often yields improvements of existing procedures.  While closure principles for FWER control \citep{marcus_closed_testing_1976} and FDP probability bounds \citep{goeman2011multiple} are known, our principle additionally includes FDR control which was an open problem, while also recovering these earlier known closure principles.
    \item Illustration of the application of our closure principle by introducing a closed eBH ($\textCeBH$) and a closed BY ($\textcBY$) procedure that dominate the original methods (Section~\ref{sec:closed-ebh}). Both eBH and BY are well-studied and were not generally known to be inadmissible (it was known that BY can be improved by randomization, but not by a deterministic procedure \citep{xu_more_powerful_2023}).
    \item Showing that in its most general form our closure principle allows for post-hoc choice (in a specific sense) of the significance level $\alpha$  (Section~\ref{sec:closure}) and for post-hoc choice of the error metric (Section~\ref{sec:general-closure}), thus providing multiple testing guarantee that has not been considered before.
\end{itemize}

\paragraph{Paper outline.} We introduce the $\textCeBH$ procedure in \Cref{sec:closed-ebh}. In \Cref{sec:compound}, we formulate the closed compound eBH procedure for compound e-values. Then, we formulate our closure principle for all FDR controlling procedures in \Cref{sec:closure} and then extend it to any expectation based error rate in \Cref{sec:general-closure}. We describe some extensions and synergies of the $\textCeBH$ procedure with existing e-value techniques in  \Cref{sec:extensions} (i.e., randomization, e-merging, boosting). We present simulations showing the improvements of $\textCeBH$ over existing eBH methodology in \Cref{sec:simulations}. Lastly, we discuss prior work in depth in \Cref{sec:related-work} and recap our findings in \Cref{sec:conclusion}.

\section{The closed eBH ($\textCeBH$) procedure}\label{sec:closed-ebh}

Our procedure rejects more hypotheses (often strictly) than eBH, while still controlling the FDR.
The construction proceeds akin to the closure principle for familywise error control \citep{marcus_closed_testing_1976} by first defining a test for every intersection hypothesis.
Define the e-value to test $\Hcal_A$ as
\[
    E_A =|A|^{-1} \sum_{i \in A} E_i . \label{eq:avg-e-value}
\]
The arithmetic mean is a symmetric admissible e-value merging function \citep{vovk_e-values_calibration_2021}. Let $\expect_A$ denote the supremum of the expectations of a random variable over distributions where $\Hcal_A$ holds (i.e., for any distribution $\mathcal{P} \in \hypset_A$). By the definition of $E_j$ being an e-value for each $j \in A$, for any $A \subseteq [K]$, we have that
\[
\expect_A[E_A] \leq 1.
\]
We define the set of \emph{candidate} discovery sets as
\begin{align}
\label{eq:candidates}
\overline{\mathcal{C}}\coloneqq \left\{\rejset \subseteq [K]:
E_A \geq \frac{\FDP_A(\rejset)}{\alpha} \quad \forall {A \subseteq [K]}
\right\}.
\end{align}

\begin{proposition}\label{prop:cebh_FDR}
    For any procedure $\Rcal$ that ensures $\rejset \in \overline{\mathcal{C}}$, we have that $\FDR(\Rcal) \leq \alpha$.
\end{proposition}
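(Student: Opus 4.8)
The plan is to collapse the whole FDR bound to a single, deterministic inequality by exploiting the fact that membership in $\overline{\mathcal{C}}$ imposes a constraint for \emph{every} $A \subseteq [K]$ — and in particular for the (unknown) true null set $A^*$. Since $A^*$ is not known to the procedure, the full family of constraints in \eqref{eq:candidates} is genuinely needed; but once we condition on the true distribution, exactly one of them does all the work, and no union bound or dependence structure on the $E_i$ is invoked.

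Concretely, fix the data-generating distribution $\mathcal{P}^*$ and let $A^* \subseteq [K]$ be the associated true null set, so $\mathcal{P}^* \in \hypset_{A^*}$. If $A^* = \emptyset$ there are no possible false discoveries and $\FDP_{A^*} \equiv 0$, so the claim is trivial; assume $A^* \neq \emptyset$. Because $\Rcal$ ensures $\Rcal(\mathbf{X}) \in \overline{\mathcal{C}}$ (for every realization of $\mathbf{X} = \mathbf{E}$), the defining condition of $\overline{\mathcal{C}}$ evaluated at $A = A^*$ gives, pointwise in the data,
\[
\FDP_{A^*}(\Rcal(\mathbf{X})) \leq \alpha\, E_{A^*}, \qquad E_{A^*} = |A^*|^{-1}\sum_{i \in A^*} E_i .
\]

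Next I would take expectations under $\mathcal{P}^*$ on both sides. By the definition of $\FDR$ and monotonicity of expectation,
\[
\FDR(\Rcal) = \expect_{\mathcal{P}^*}\!\left[\FDP_{A^*}(\Rcal(\mathbf{X}))\right] \leq \alpha\, \expect_{\mathcal{P}^*}[E_{A^*}].
\]
Since $\mathcal{P}^* \in \hypset_{A^*}$, we have $\expect_{\mathcal{P}^*}[E_{A^*}] \leq \expect_{A^*}[E_{A^*}] \leq 1$, where the last inequality is exactly the fact noted before the proposition — the arithmetic mean is an admissible e-merging function, so $E_{A^*}$ is itself an e-value for $\hypset_{A^*}$. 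Chaining the two displays yields $\FDR(\Rcal) \leq \alpha$.

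There is essentially no deep obstacle here; the only things to handle carefully are the degenerate cases ($A^* = \emptyset$, or $\Rcal(\mathbf{X}) = \emptyset$ where $\FDP_A = 0$ trivially satisfies the constraint since e-values are nonnegative), and making sure the supremum-expectation notation $\expect_{A^*}$ is applied correctly so that $\mathcal{P}^* \in \hypset_{A^*}$ is all that is used. The conceptual point worth emphasizing in the writeup is \emph{why} one cannot simply enforce the $A = A^*$ constraint alone: $A^*$ is unobserved, so the procedure must respect all $A \subseteq [K]$, and the argument above shows this suffices.
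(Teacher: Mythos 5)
Your proof is correct and follows the same route as the paper's: specialize the $\overline{\mathcal{C}}$ constraint to $A = A^*$ to get the pointwise bound $\FDP_{A^*}(\rejset) \leq \alpha E_{A^*}$, then take expectations and use $\expect_{A^*}[E_{A^*}] \leq 1$. Your version simply spells out the degenerate cases and the e-merging fact that the paper leaves implicit.
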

\begin{proof}
For any $\rejset \in \overline{\mathcal{C}}$, we have that
    \begin{align}
        \FDP_{A^*}(\rejset) \leq \alpha E_{A^*}.
    \end{align}
    Taking expectations, the claim is proved.
\end{proof}
Let $R_i \subseteq [K]$ be the set of hypotheses corresponding to the $i$ largest e-values (with ties being arbitrarily broken) for each $i \in [K]$.
Define the procedure that outputs largest discovery set in $\overline{\mathcal{C}}$ as the $\textCeBH$ procedure:
\begin{align}
\label{eq:k-cebh}
    k^\CeBH &\coloneqq \max\ \{i \in [K]: R_i \in \overline{\mathcal{C}}\}\\
\label{eq:R-cebh}
    \rejset^\CeBH &\coloneqq \Rcal^\CeBH(\mathbf{E}) \coloneqq R_{[:k^\CeBH]}.
\end{align}
Note that $\rejset^\CeBH \in \underset{\rejset \subseteq \overline{\mathcal{C}}}{\textnormal{argmax}} |\rejset|$ by definition.
We will now show that this method is provably (usually strictly) more powerful as eBH, the standard way to control the false discovery rate (FDR) in this setting. Another way of interpreting is eBH is to first define \emph{self-consistent} candidate discovery sets $\mathcal{C}^\SC$, and then note that $\rejset^\eBH$ is the largest such set. Formally,
\begin{align}
    \mathcal{C}^{\SC} &\coloneqq \left\{\rejset \subseteq [K]: \min_{i \in \rejset}\ E_i \geq K / (\alpha |\rejset|) \right\},\\
    \rejset^{\eBH} &= \underset{\rejset \subseteq \mathcal{C}^\SC}{\textnormal{argmax}}\  |\rejset|.
\end{align}

\begin{proposition} \label{prop:cebh-dominates-ebh}
    $\mathcal{C}^\SC \subseteq \overline{\mathcal{C}}$, and consequently $\rejset^\eBH \subseteq \rejset^{\CeBH}$.
\end{proposition}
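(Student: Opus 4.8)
The plan is to establish the inclusion $\mathcal{C}^\SC \subseteq \overline{\mathcal{C}}$ directly from the two definitions, and then deduce $\rejset^\eBH \subseteq \rejset^\CeBH$ by combining it with the elementary structural facts about $\rejset^\eBH$.

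\emph{Step 1: $\mathcal{C}^\SC \subseteq \overline{\mathcal{C}}$.} Fix $\rejset \in \mathcal{C}^\SC$. If $\rejset = \emptyset$ then $\FDP_A(\rejset) = 0$ for every $A$, so the condition in \eqref{eq:candidates} holds because each $E_A \geq 0$; hence $\rejset \in \overline{\mathcal{C}}$. Otherwise $|\rejset| \vee 1 = |\rejset|$ and, by self-consistency, $E_i \geq K/(\alpha |\rejset|)$ for every $i \in \rejset$. Fix any nonempty $A \subseteq [K]$ (the requirement for $A = \emptyset$ is vacuous since $\FDP_\emptyset \equiv 0$). Discarding the nonnegative summands indexed by $A \setminus \rejset$, using the self-consistency bound on $A \cap \rejset \subseteq \rejset$, and then using $|A| \leq K$,
\[
\sum_{i \in A} E_i \;\geq\; \sum_{i \in A \cap \rejset} E_i \;\geq\; |A \cap \rejset|\,\frac{K}{\alpha |\rejset|} \;\geq\; |A \cap \rejset|\,\frac{|A|}{\alpha |\rejset|}.
\]
Dividing by $|A|$ gives $E_A \geq \frac{|A \cap \rejset|}{\alpha |\rejset|} = \FDP_A(\rejset)/\alpha$, i.e.\ $\rejset \in \overline{\mathcal{C}}$. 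So the key inequality underpinning the whole argument is simply $|A| \leq K$.

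\emph{Step 2: $\rejset^\eBH \subseteq \rejset^\CeBH$.} I would first record the standard bookkeeping for eBH: if $k^\eBH = 0$ then $\rejset^\eBH = \emptyset$ and the claim is trivial, so assume $k^\eBH \geq 1$. From the maximality defining $k^\eBH$ one gets $|\rejset^\eBH| = k^\eBH$, and since $\rejset^\eBH = \{i : E_i \geq K/(\alpha k^\eBH)\}$ selects exactly the $k^\eBH$ largest e-values with no tie straddling the threshold (the $(k^\eBH{+}1)$-th largest e-value is strictly below $K/(\alpha k^\eBH)$, while the $k^\eBH$-th largest is at least it), we have $\rejset^\eBH = R_{k^\eBH}$. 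Moreover $\min_{i \in \rejset^\eBH} E_i \geq K/(\alpha k^\eBH) = K/(\alpha|\rejset^\eBH|)$, so $\rejset^\eBH \in \mathcal{C}^\SC$, hence $R_{k^\eBH} = \rejset^\eBH \in \overline{\mathcal{C}}$ by Step 1. By the maximality in \eqref{eq:k-cebh} this forces $k^\eBH \leq k^\CeBH$, and since the top-$i$ sets are nested ($R_1 \subseteq R_2 \subseteq \cdots$), $\rejset^\eBH = R_{k^\eBH} \subseteq R_{k^\CeBH} = \rejset^\CeBH$.

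\emph{Main obstacle.} There is no deep difficulty here; Step 1 is a one-line estimate. The only point needing care is Step 2's bookkeeping: Step 1 alone gives merely $|\rejset^\eBH| \leq |\rejset^\CeBH|$, so to upgrade this to the set containment $\rejset^\eBH \subseteq \rejset^\CeBH$ one must identify $\rejset^\eBH$ with the nested top-$k^\eBH$ set $R_{k^\eBH}$ that defines $\rejset^\CeBH$ — which is exactly why verifying $|\rejset^\eBH| = k^\eBH$ and the absence of a boundary tie matters.
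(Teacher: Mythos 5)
Your proposal is correct and follows essentially the same route as the paper: the same chain of inequalities (drop terms outside $\rejset$, apply self-consistency, use $|A| \leq K$) establishes $\mathcal{C}^\SC \subseteq \overline{\mathcal{C}}$, and the set containment then follows from nestedness of the top-$k$ sets. Your Step 2 is somewhat more careful than the paper's (explicitly verifying $|\rejset^\eBH| = k^\eBH$ and $\rejset^\eBH = R_{[:k^\eBH]}$ with no boundary tie), but this is a refinement of the same argument rather than a different one.
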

\begin{proof}
    Consider the following derivation for arbitrary $A$ and $\rejset$:
    \begin{align}
        E_A &= |A|^{-1}\sum\limits_{i \in A} E_i\\
            &\geq
            |A|^{-1} |A\cap \rejset| \min_{i \in A \cap \rejset} E_i
            \geq
            |A|^{-1}|A\cap \rejset|\min_{i \in \rejset} E_i
            \geq
            K^{-1} |A\cap \rejset|\min_{i \in \rejset} E_i.
        \label{eq:lb-ineq}
    \end{align}
    Recall that $E_A$ is defined as the arithmetic mean e-value in \eqref{eq:avg-e-value}. The first inequality is lower bounding each term in the sum and also dropping some positive terms. The second inequality follows by strictly expanding the set which we take a minimum over. The third inequality is because $K \geq |A|$.

    Now, assume that $\FDP_A(\rejset) > 0$ and $\rejset \in \mathcal{C}^\SC$ (otherwise $\FDP_A(\rejset) / E_A \leq \alpha$ is trivially satisfied).
    We have the following result for any $\rejset \in \mathcal{C}^\SC$:
    \begin{align}
    E_A
    \geq
     K^{-1} |A\cap \rejset|\min_{i \in \rejset} E_i
    \geq
    \frac{|A \cap \rejset|}{\alpha |\rejset|}
    =
    \frac{\FDP_A(\rejset)}{\alpha}.\label{eq:ineq-avg-lb}
    \end{align}
    The first inequality is by \eqref{eq:lb-ineq}, and the second inequality is by the definition of $\mathcal{C}^\SC$.
Thus, we get that
    \begin{align}
        \frac{\FDP_A(\rejset)}{E_A} \leq \alpha
    \end{align} for any $\rejset \in \mathcal{C}^\SC$ by \eqref{eq:ineq-avg-lb}. Thus, we get that $\mathcal{C}^\SC \subseteq \overline{\mathcal{C}}$ and also that $R^\eBH \in \mathcal{C}$ Since $R^\CeBH \in \underset{R \in \overline{\mathcal{C}}}{\text{argmax}}\ |R|$, we have that $|R^\eBH| = k^\eBH \leq k^{\CeBH} = |R^\CeBH|$ and consequently $R^\eBH = R_{[:k^\eBH]} \subseteq R_{[:k^\CeBH]} = R^\CeBH$. Thus, we have shown our desired results.
\end{proof}

The looseness in the inequalities in \eqref{eq:lb-ineq} hints at $\rejset^\CeBH$ being potentially more powerful than eBH, and we will observe that is indeed the case in our simulations.

\citet[Section S3.2]{ignatiadis_e-values_unnormalized_2022} formulates the \emph{minimally adaptive eBH procedure} where one first tests the global null with the arithmetic mean of the e-values, and only once the global null rejects, the eBH procedure is applied at level $\alpha K / (K - 1)$. Formally we can define
\begin{align}
k^{\eBH\text{m}} &\coloneqq \max \left\{k \in [K]: E_{[K]} \geq \alpha^{-1} \text{ and }\sum_{i = 1}^K \ind\left\{E_i \geq \frac{K - 1}{\alpha k}\right\} \geq k\right\} \cup \{0\},\\
\rejset^{\eBH\text{m}} &\coloneqq \Rcal^{\eBH\text{m}}(\mathbf{E}) \coloneqq R_{[:k^{\eBH\text{m}}]}
.
\end{align}They show that this method dominates eBH as well.
This method is, in turn, dominated by the $\textCeBH$ procedure.
\begin{proposition} \label{prop:cebh-dominates-ebhm}
$\rejset^{\eBH\textnormal{m}} \subseteq \rejset^{\CeBH}$.
\end{proposition}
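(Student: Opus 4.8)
The plan is to reduce the claim to membership in $\overline{\mathcal{C}}$ and then invoke the maximality built into $\textCeBH$. Write $k \coloneqq k^{\eBH\textnormal{m}}$, so that $\rejset^{\eBH\textnormal{m}} = R_{[:k]}$. If $k = 0$ the containment is trivial since $\rejset^{\eBH\textnormal{m}} = \emptyset$. If $k \geq 1$ and I can show $R_{[:k]} \in \overline{\mathcal{C}}$, then the definition of $k^{\CeBH}$ in \eqref{eq:k-cebh} forces $k \leq k^{\CeBH}$, and hence $\rejset^{\eBH\textnormal{m}} = R_{[:k]} \subseteq R_{[:k^{\CeBH}]} = \rejset^{\CeBH}$ by the nestedness of the top-$i$ sets (the same fact used at the end of the proof of Proposition~\ref{prop:cebh-dominates-ebh}).

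So the crux is: assuming $k \geq 1$, verify $E_A \geq \FDP_A(R_{[:k]})/\alpha$ for every $A \subseteq [K]$. From the definition of $k^{\eBH\textnormal{m}}$ I extract two facts: (i) $E_{[K]} \geq 1/\alpha$; and (ii) at least $k$ of the $E_i$ satisfy $E_i \geq (K-1)/(\alpha k)$, so the $k$ largest e-values are all at least that large, i.e.\ $\min_{i \in R_{[:k]}} E_i \geq (K-1)/(\alpha k)$. Fix $A$. If $A \cap R_{[:k]} = \emptyset$ then $\FDP_A(R_{[:k]}) = 0$ and there is nothing to prove, so assume $A \cap R_{[:k]} \neq \emptyset$ and split on whether $A = [K]$.

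If $A = [K]$, then $\FDP_{[K]}(R_{[:k]}) = k/k = 1$, so the required bound is exactly $E_{[K]} \geq 1/\alpha$, which is fact (i). If $|A| \leq K - 1$, I repeat the averaging estimate from the proof of Proposition~\ref{prop:cebh-dominates-ebh}:
\begin{align}
E_A = |A|^{-1}\sum_{i \in A} E_i \geq \frac{|A \cap R_{[:k]}|}{|A|}\min_{i \in R_{[:k]}} E_i \geq \frac{|A \cap R_{[:k]}|}{K-1}\cdot \frac{K-1}{\alpha k} = \frac{|A \cap R_{[:k]}|}{\alpha k} = \frac{\FDP_A(R_{[:k]})}{\alpha},
\end{align}
using fact (ii) and $|A| \leq K-1$. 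This shows $R_{[:k]} \in \overline{\mathcal{C}}$ and completes the argument.

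I expect the only delicate point to be the case $|A| = K$: the generic averaging bound there gives only $E_{[K]} \geq (K-1)/(\alpha K)$, which is strictly weaker than the needed $1/\alpha$, so the global-null condition $E_{[K]} \geq 1/\alpha$ in the definition of $k^{\eBH\textnormal{m}}$ is essential and is used precisely in that case. All remaining steps are a mechanical rerun of the eBH computation, so I anticipate no further obstacle.
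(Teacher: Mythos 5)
Your proof is correct and follows essentially the same route as the paper's: verify $R^{\eBH\textnormal{m}} \in \overline{\mathcal{C}}$ by using the global-null condition $E_{[K]} \geq 1/\alpha$ for the case $A = [K]$ and the averaging bound with the improved constant $(K-1)^{-1}$ for $|A| \leq K-1$, then invoke maximality of $\textCeBH$. Your write-up is in fact slightly more explicit than the paper's about why the case $A=[K]$ genuinely requires the global-null test, but there is no substantive difference in the argument.
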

\begin{proof}
    By definition of $R^{\eBH\text{m}}$, we know that if any discoveries are made, $$E_{[K]} \geq \alpha^{-1} = \frac{\FDP_{[K]}(R^{\eBH\text{m}})}{\alpha}.$$
    Consider the following derivation for arbitrary $A \subset [K]$, i.e, $|A| \leq K - 1$, and $\rejset\subseteq [K]$:
    \begin{align}
      E_A &= |A|^{-1}\sum\limits_{i \in A} E_i\\
            &\geq
            |A|^{-1} |A\cap \rejset| \min_{i \in A \cap \rejset} E_i
            \geq
            |A|^{-1}|A\cap \rejset|\min_{i \in \rejset} E_i
            \geq
            (K - 1)^{-1} |A\cap \rejset|\min_{i \in \rejset} E_i.
        \label{eq:lb-ineq-m}
    \end{align}
    Recall that $E_A$ is defined as the arithmetic mean e-value in \eqref{eq:avg-e-value}. The first inequality is lower bounding each term in the sum and also dropping some positive terms. The second inequality follows by strictly expanding the set which we take a minimum over. The third inequality is because $|A| \leq K - 1$.

    Now, consider $A$ where $\FDP_A(\rejset^{\eBH\text{m}}) > 0$ (otherwise $\FDP_A(\rejset^{\eBH\text{m}}) /\alpha  \leq E_A$ is trivially satisfied).
    We have the following result:
    \begin{align}
    E_A
    \geq
     (K - 1)^{-1} |A\cap \rejset|\min_{i \in \rejset^{\eBH\text{m}}} E_i
    \geq
    \frac{|A \cap \rejset^{\eBH\text{m}}|}{\alpha |\rejset^{\eBH\text{m}}|}
    =
    \frac{\FDP_A(\rejset^{\eBH\text{m}})}{\alpha}.\label{eq:ineq-avg-lb-m}
    \end{align}
    The first inequality is by \eqref{eq:lb-ineq-m}, and the second inequality is by the definition of $\rejset^{\eBH\text{m}}$. Thus, $\rejset^{\eBH\text{m}} \in \overline{\mathcal{C}}$ and we have that $R^{\eBH\text{m}}  \subseteq  R^\CeBH$.
\end{proof}

\paragraph{Where is $\CeBH$ gaining power?} To better illustrate the features of the $\CeBH$ procedure, we can take a close look at the thresholds on the individual e-values for rejection. Choose $A = \{j\}$ for each $j \in R_{[:k]}$, we see that $\overline{\mathcal{C}}$ only enforces the constraint that
\[
E_j \geq \frac{1}{\alpha k^\CeBH} \text{ for every } j \in R^\CeBH.
\]
Of course, it imposes other ``joint constraints'' on averages of these e-values as well: for instance, the average of the smallest $m$ rejected e-values must esceed $m / (\alpha k^{\CeBH})$.
In contrast, the eBH procedure satisfies
\[
E_j \geq \frac{1}{\alpha } \text{ for every } j \in R^\eBH,
\]
which is a significantly harsher threshold.
Meanwhile, the minimally adaptive eBH procedure that makes $k^{\eBH\text{m}} \geq k^\eBH$ rejections satisfies
\[
E_j \geq \frac{1}{\alpha \tfrac{K}{K-1} } \text{ for every } j \in R^{\eBH\text{m}},
\]
which is again a much more stringent condition.
Since $K/(K-1) \leq 2$, with equality only for $K=2$, the latter threshold matches that of the closed eBH procedure for $K=2$, but is more stringent otherwise. It can be easily shown that the closed eBH procedure indeed equals the minimally adaptive eBH procedure for $K=2$, but can be more powerful otherwise, as demonstrated by the following simple example.
\begin{example}[Closed eBH beats minimally adaptive eBH]
    We give a simple example with $K=3$ to show that the closed eBH rule can reject strictly more than the minimally adaptive eBH rule. For the set of input e-values $\mathbf{E} = (60, 39, 11)$, the eBH procedure only makes one rejection at $\alpha = 0.05$, narrowly missing out on the second rejection. The minimally adaptive eBH procedure effectively runs eBH at level $3\alpha/2$, and makes two rejections.
    The closed eBH procedure rejects all 3 hypotheses, which can be checked by verifying that the set $\{1,2,3\}$ lies in $\overline{\mathcal C}$, since each e-value is larger than 20/3, all pairwise averages exceed 40/3 and the overall average exceeds 20.
\end{example}

\begin{algorithm}[h]
    \caption{$O(K^3)$ dynamic programming algorithm for computing discoveries made by the $\textCeBH$ procedure. Let $E_{(i)}$ denote the $i$th smallest e-value for each $i \in [K]$.}
\label{alg:cebh-dp}
\SetAlgoLined
\KwIn{E-values $(E_1, \dots, E_K)$ and target FDR level $\alpha \in [0, 1]$}
\KwOut{Closed eBH discovery set $\rejset^\CeBH$}

\For{$k = K, K - 1, \dots, k^\eBH + 1$}{
  \textit{(Attempt to reject hypotheses with $k$ largest e-values)}\\
  \For{$r = 1$ to $k$}{
    \For{$m = r$ to $r + K - k$}{
        Let $S_{(r, k)} = \sum_{i = 1}^{r} E_{(K - k + i)}$ \quad \textit{(sum of $r$ smallest e-values among top $k$)}.\\
        Let $S_{(m - r)} = \sum_{i = 1}^{m - r} E_{(i)}$ \quad \textit{(sum of $m - r$ smallest e-values overall)}.\\
      Let $E_{(k, r, m)} = \frac{S_{(r, k)} + S_{(m - r)}}{m}$, and $\FDP_{(k, r)} = \frac{r}{k}$.\\
      \If{$\FDP_{(k, r)} / \alpha > E_{(k, r, m)}$}{
        \textbf{continue to next} $k$.
      }
    }
  }
  \Return{$\rejset^\CeBH = R_{[:k]}$ \textit{(discovery set of }$k$\textit{ largest e-values)}}
}
\Return{$\rejset^\CeBH = R^\eBH$}
\end{algorithm}
\paragraph{Efficiently computing $\rejset^\CeBH$}We use a dynamic programming approach to efficiently compute $\rejset^\CeBH$ --- our approach is summarized in Algorithm~\ref{alg:cebh-dp}. First, we note that computing $\rejset^\CeBH$ reduces to computing the discovery set size $k^\CeBH \in [K]$ and setting the discovery set as the hypotheses with the $k^\CeBH$ largest e-values. Recall that the $\textCeBH$ procedure finds the largest rejection set $\rejset \subseteq [K]$ such that the following condition is satisfied for all subsets $A \subseteq [K]$:
\[
    \widehat{\FDR}(\rejset) \coloneqq  \max_{A \subseteq 2^{[K]}}\ \frac{\FDP_A(\rejset)}{E_A} \leq \alpha,
\]
where we let $0/0 = 0$.

The algorithm iteratively considers each $\rejset_k$, which contains the hypotheses corresponding to the $k$ largest e-values, and determines whether $\rejset_k \in \overline{\mathcal{C}}$ is true. The algorithm starts from $K$ and stops as soon as it is able to discover $\rejset_k$ or reaches the eBH discovery set, i.e., at $k = k^\eBH$.

To do this efficiently, we note that for each $r \in [k]$, we can fix $r$ to be the number of false discoveries. Then, to maximize $\FDP_A(\rejset) / E_A$, $A$ includes $r$ true nulls in $\rejset_k$ with the smallest e-values in $\rejset_k$ and we only need to maximize over $m \in \{r, \dots, r + K - k\} $, the number of true nulls outside of the rejection set, where for each $m$ we pick the $m - r$ smallest e-values.

Let $A_{(k, r, m)}$ be the indices corresponding to the $r$ smallest e-values in $\rejset_k$ and the $m - r$ smallest e-values outside of $\rejset_k$. Then, we can say that
$$
\widehat{\FDR}(\rejset_k) = \max_{r \in [k], m \in \{r, r + K - k\}}\ \frac{r / k}{E_{A_{(k, r, m)}}}.
$$

Thus, we ended up with $O(K^3)$ computation, since for each $k \in \{|\rejset^\eBH| + 1, \dots, K\}$, we can compute $\widehat{\FDR}(\rejset_k)$ as a maximization over a $O(K^2)$ sized table which also only take $O(K^2)$ time to compute.

\paragraph{Post-hoc validity}
A stronger form of control on the FDP has been considered in prior literature \citep{grunwald_neyman-pearson_e-values_2024,xu_post-selection_inference_2022,koning_post-hoc_$a$_2024} that the eBH (or any e-self-consistent) procedure satisfies. To define this guarantee, we let $\Cclosed_\beta$ denote $\Cclosed$ as defined in \eqref{eq:candidates} constructed for level $\beta \in [0, 1]$.
\begin{proposition}
    The following post-hoc validity is satisfied for the set of $(\Cclosed_\beta)_{\beta \in [0, 1]}$.
    \begin{align}
        \expect\left[\sup_{\beta \in [0, 1]}\sup_{R \in \Cclosed_\beta}\ \frac{\FDP_{A^*}(R)}{\beta}\right] \leq 1, \label{eq:post-hoc-valid}
    \end{align} where we let $0 / 0 = 1$.
\end{proposition}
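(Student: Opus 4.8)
The plan is to collapse the entire supremum to the single scalar $E_{A^*}$: I will show that, almost surely,
\[
\sup_{\beta \in [0,1]}\ \sup_{R \in \Cclosed_\beta}\ \frac{\FDP_{A^*}(R)}{\beta}\ \leq\ E_{A^*},
\]
after which \eqref{eq:post-hoc-valid} is immediate, since $A^*$ is the true null index set (so $\mathcal{P}^* \in \Hcal_{A^*}$) and therefore $\expect[E_{A^*}] \le \expect_{A^*}[E_{A^*}] \le 1$ by the e-value property of the arithmetic mean $E_{A^*}$ recorded just before \eqref{eq:candidates}.

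To obtain the displayed bound, fix a realization of $\mathbf{E}$ (restricting, at the cost of a null set, to the event on which all $E_i$, and hence all $E_A$, are finite). For $\beta \in (0,1]$ and any $R \in \Cclosed_\beta$, the membership condition defining $\Cclosed_\beta$, instantiated at $A = A^*$ (legitimate since $A^* \subseteq [K]$), is exactly $E_{A^*} \ge \FDP_{A^*}(R)/\beta$ --- precisely the inequality we want for that pair. The boundary value $\beta = 0$ is the only case needing separate treatment, and it is degenerate: any nonempty $R$ has $\FDP_{[K]}(R) = 1 > 0$, so the condition at $A = [K]$ would demand $E_{[K]} \ge 1/0 = +\infty$ and cannot hold, whence $\Cclosed_0 \subseteq \{\emptyset\}$; and if $\emptyset \in \Cclosed_0$, the condition at $A = A^*$, read with the stated convention $0/0 = 1$, forces $E_{A^*} \ge 1$, so that the corresponding term $\FDP_{A^*}(\emptyset)/0 = 0/0 = 1$ is again $\le E_{A^*}$ (and if $\Cclosed_0 = \emptyset$ there is nothing to bound). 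Since every admissible pair $(\beta, R)$ yields a ratio at most $E_{A^*}$, and $E_{A^*}$ is constant in $\beta$ and $R$, taking both suprema preserves the bound, and taking expectations finishes the proof.

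I expect the only subtle point to be exactly this boundary bookkeeping: one must check that the $\beta = 0$ slice of the double supremum is still dominated by $E_{A^*}$ rather than contributing a constant $1$, which would wreck the bound on the event $\{E_{A^*} < 1\}$ --- this forces the $0/0 = 1$ convention to be applied consistently inside the definition of $\Cclosed_0$ as well as in \eqref{eq:post-hoc-valid}. Beyond that there are no real obstacles: for fixed $\mathbf{E}$ the inner supremum ranges over the finite family $2^{[K]}$, the pointwise bound holds separately for each $\beta$, and no compactness, continuity, or measurability arguments are needed.
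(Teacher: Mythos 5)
Your proof is correct and follows essentially the same route as the paper's: establish the deterministic pointwise bound $\sup_{\beta}\sup_{R \in \Cclosed_\beta} \FDP_{A^*}(R)/\beta \leq E_{A^*}$ via the membership condition of $\Cclosed_\beta$ at $A = A^*$, then take expectations and use $\expect_{A^*}[E_{A^*}] \leq 1$. Your careful treatment of the $\beta = 0$ boundary (ensuring the $0/0=1$ convention is applied consistently inside the definition of $\Cclosed_0$) is a detail the paper's one-line proof elides, and it is handled correctly.
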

\begin{proof}
    By construction of $\Cclosed_\beta$ for each $\beta \in [0, 1]$, we have the following deterministic inequality
    \begin{align}
        \sup_{\beta \in [0, 1]}\sup_{R \in \Cclosed_\beta}\ \frac{\FDP_{A^*}(R)}{\beta} \leq E_{A^*}.
    \end{align} We get our desired result by taking expectations on both sides.
\end{proof}
Thus, we get a form of control even when we choose both $\beta$ and $R$ in a data-dependent fashion.

\subsection{The closed BY ($\textcBY$) procedure\label{sec:BY}} Our results also have implications for the \citet{benjamini_control_false_2001} (BY) procedure, which ensures FDR control under arbitrary dependence among p-values $\mathbf{X} = \mathbf{P} \coloneqq (P_1, \dots, P_K) \in [0, 1]^K$. The procedure is defined as follows:
\begin{align}
    k^{\BY} &\coloneqq \max \left\{k \in \{0\} \cup [K]: \sum_{i = 1}^k \ind\left\{P_i \leq \frac{\alpha k}{K \ell_K} \geq k\right\}\right\}\\
    \rejset^\BY &\coloneqq \Rcal^\BY(\mathbf{P}) \coloneqq R_{(:k^\BY)},
\end{align} where $R_{(:k)}$ denotes the set of hypotheses corresponding to the $k$ smallest p-values.

Define the following p-to-e calibrator $f^{\BY(\alpha, K)}: [0, 1] \rightarrow \reals^{\geq 0}$ \citep{xu_post-selection_inference_2022}:
\begin{align}
    f^{\BY(\alpha, K)}(x) \coloneqq K\alpha^{-1}
    \left(\left\lceil x(\alpha/ (K\ell_K))^{-1} \right\rceil \vee 1 \right)^{-1}\ind\{x \leq \alpha^{-1}\ell_K\}, \label{eq:by-calibrator}
\end{align} where $\ell_K$ is the $K$th harmonic number.
\citet{wang_false_discovery_2022} showed that that this is equivalent to applying eBH to $$\mathbf{E}^{\BY(\alpha, K)} = (f^{\BY(\alpha, K)}(P_1), \dots, f^{\BY(\alpha, K)}(P_K)).\label{eq:by-calib-evalues}$$
Hence, we can now the $\textcBY$ procedure as
\begin{align}
    R^{\cBY} \coloneqq \Rcal^{\cBY} (\mathbf{P}) \coloneqq \Rcal^\CeBH(\mathbf{E}^{\BY(\alpha, K)}).
\end{align}
\begin{proposition}
    $R^{\BY} \subseteq R^{\cBY}$.
\end{proposition}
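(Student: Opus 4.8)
The plan is to reduce the claim to Proposition~\ref{prop:cebh-dominates-ebh} by expressing both $R^{\BY}$ and $R^{\cBY}$ in terms of the single e-value vector $\mathbf{E}^{\BY(\alpha,K)} = (f^{\BY(\alpha,K)}(P_1),\dots,f^{\BY(\alpha,K)}(P_K))$. By definition, $R^{\cBY} = \Rcal^\CeBH(\mathbf{E}^{\BY(\alpha,K)})$. On the other hand, the result of \citet{wang_false_discovery_2022} recalled just above gives the set identity $\Rcal^\BY(\mathbf{P}) = \Rcal^\eBH(\mathbf{E}^{\BY(\alpha,K)})$ (this uses that $f^{\BY(\alpha,K)}$ is nonincreasing, so the $k$ smallest p-values are exactly the $k$ largest calibrated e-values under a matched tie-breaking rule, for every $k$). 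Hence it suffices to establish the generic inclusion $\Rcal^\eBH(\mathbf{E}) \subseteq \Rcal^\CeBH(\mathbf{E})$ at the particular input $\mathbf{E} = \mathbf{E}^{\BY(\alpha,K)}$.

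First I would observe that this is precisely the content of Proposition~\ref{prop:cebh-dominates-ebh}: inspecting its proof, the inclusions $\mathcal{C}^\SC \subseteq \overline{\mathcal{C}}$ and $R^\eBH \subseteq R^\CeBH$ are derived by purely algebraic manipulation of the definitions of $\mathcal{C}^\SC$ and $\overline{\mathcal{C}}$ and are valid for any nonnegative input vector. Applying it to $\mathbf{E}^{\BY(\alpha,K)}$ and chaining with the two identities above yields $R^\BY = \Rcal^\eBH(\mathbf{E}^{\BY(\alpha,K)}) \subseteq \Rcal^\CeBH(\mathbf{E}^{\BY(\alpha,K)}) = R^{\cBY}$, which is the claim. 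For completeness I would also record that the entries of $\mathbf{E}^{\BY(\alpha,K)}$ are genuine e-values for $\Hcal_1,\dots,\Hcal_K$ --- this is the defining property of the p-to-e calibrator $f^{\BY(\alpha,K)}$ of \citet{xu_post-selection_inference_2022}, so $\expect_A[E_A]\le 1$ for all $A$ and $R^{\cBY}$ indeed controls FDR at level $\alpha$ by Proposition~\ref{prop:cebh_FDR}; this is not needed for the inclusion itself but is what makes the domination meaningful rather than vacuous.

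I do not anticipate a genuine obstacle here; the only point demanding care is that the cited $\BY$-as-$\eBH$ equivalence be read as an exact equality of rejection sets --- including the degenerate case $k^\BY = 0$ and a consistent convention for breaking ties among equal p-values/e-values --- rather than merely as two procedures with the same FDR guarantee. Once that is in hand the argument is a two-line chaining. As a concluding remark I would note that the same reasoning gives slightly more: the looseness of the inequalities in \eqref{eq:lb-ineq} carries over to this specialized setting, so $R^\BY \subseteq R^{\cBY}$ is typically a strict inclusion, which exhibits the inadmissibility of the $\BY$ procedure as a deterministic method.
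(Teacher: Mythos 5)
Your argument is exactly the paper's: it cites the \citet{wang_false_discovery_2022} identity $\Rcal^\BY(\mathbf{P}) = \Rcal^\eBH(\mathbf{E}^{\BY(\alpha,K)})$ and then invokes Proposition~\ref{prop:cebh-dominates-ebh} at the input $\mathbf{E}^{\BY(\alpha,K)}$, which is precisely the one-line chain the paper gives. Your additional care about tie-breaking, the $k^\BY=0$ case, and the validity of the calibrated e-values is a harmless (and reasonable) elaboration of the same route.
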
 The above proposition follows directly from the formulation of BY as eBH and \Cref{prop:cebh-dominates-ebh}. We also see empirically in \Cref{sec:simulations}, which gives us the rather surprising result that the BY procedure is inadmissible, even when restricted to deterministic procedures (as \citet{xu_more_powerful_2023} provided a randomized version of the BY procedure that dominated the original).

\subsection{The compound $\textCeBH$ procedure}\label{sec:compound}

\citet{ignatiadis_asymptotic_compound_2025} showed that eBH was a universal procedure for FDR control, i.e., every FDR controlling procedure can be written as the eBH procedure applied to some set of compound e-values. For \emph{compound e-values}, we let our data be $\mathbf{X} = \widetilde{\mathbf{E}} \coloneqq (\widetilde{E}_1, \dots, \widetilde{E}_K)$. Compound e-values are all nonnegative and satisfy the following condition
\begin{equation}\label{eq:compound-def}
\sum_{i \in A}\expect_A[\widetilde{E}_i] \leq K.
\end{equation}
for any $A \subseteq [K]$. In words, no matter which $A$ constitutes the true set of null hypotheses $A^*$, the expected sum of the corresponding compound e-values cannot exceed $K$.

Our results are complementary to the above universality result on eBH. Instead of reducing every FDR procedure to using eBH with compound e-values, we instead reduce it the $\textCeBH$ procedure with an e-value for each intersection hypothesis.
These two interpretations are interconnected, and we can formulate eBH applied to compound e-values as the application of the $\textCeBH$ procedure to an e-collection for the intersection hypotheses.

For any $A \subseteq [K]$, note that the~\eqref{eq:compound-def} implies that
\begin{align}
    E_A = K^{-1}\sum_{i \in A}\widetilde{E}_i \label{eq:c-int-evalue}
\end{align} is a valid e-value for $\hypset_A$.
Let the \emph{compound $\textCeBH$ procedure} be the $\textCeBH$ procedure that utilizes \eqref{eq:c-int-evalue} for its e-collection instead of \eqref{eq:avg-e-value} and be denoted as $\tilde{\rejset}^{\CeBH} \coloneqq \Rcal^{\CeBH}(\widetilde{\mathbf{E}})$.
Let $\tilde{\rejset}^{\eBH} \coloneqq \Rcal^\eBH(\widetilde{\mathbf{E}})$ now be the discovery set that results from applying the eBH procedure.
\begin{proposition}
    We have that $\tilde{\rejset}^{\eBH} \subseteq \tilde{\rejset}^{\CeBH}$.
\end{proposition}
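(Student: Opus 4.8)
The plan is to mirror the proof of Proposition~\ref{prop:cebh-dominates-ebh} almost verbatim; the one change is that the intersection e-value in \eqref{eq:c-int-evalue} carries the normalization $K^{-1}$ rather than $|A|^{-1}$, which happens to match the threshold $K/(\alpha k)$ appearing in the eBH rule \emph{exactly} (so, unlike in Proposition~\ref{prop:cebh-dominates-ebh}, no slack from $K \ge |A|$ is needed here). Throughout, $\overline{\mathcal{C}}$, the sets $R_i$, and $k^{\CeBH}$ are understood to be built from $\widetilde{\mathbf{E}}$ via the e-collection \eqref{eq:c-int-evalue}.

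First I would dispose of the trivial case $\tilde{\rejset}^{\eBH} = \emptyset$, where the inclusion is immediate, and otherwise set $m \coloneqq |\tilde{\rejset}^{\eBH}| \ge 1$. Applying the eBH procedure to $\widetilde{\mathbf{E}}$ yields $\tilde{\rejset}^{\eBH} = \{i \in [K] : \widetilde{E}_i \ge K/(\alpha k^{\eBH})\}$, and since the cardinality of this superlevel set is exactly $m$, which by the defining maximization of $k^{\eBH}$ is at least $k^{\eBH}$, we have $m \ge k^{\eBH}$; hence every $i \in \tilde{\rejset}^{\eBH}$ satisfies $\widetilde{E}_i \ge K/(\alpha k^{\eBH}) \ge K/(\alpha m)$, and in particular $\min_{i \in \tilde{\rejset}^{\eBH}} \widetilde{E}_i \ge K/(\alpha m)$. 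I would also record that, being a superlevel set of the compound e-values, $\tilde{\rejset}^{\eBH}$ is a prefix of any decreasing ordering of $\widetilde{\mathbf{E}}$, i.e., $\tilde{\rejset}^{\eBH} = R_{[:m]}$.

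Next, I would fix an arbitrary $A \subseteq [K]$; if $A \cap \tilde{\rejset}^{\eBH} = \emptyset$ then $\FDP_A(\tilde{\rejset}^{\eBH}) = 0$ and the candidate constraint $E_A \ge \FDP_A(\tilde{\rejset}^{\eBH})/\alpha$ holds trivially, so assume the intersection is nonempty and chain
\begin{align}
E_A = K^{-1}\sum_{i \in A} \widetilde{E}_i
&\ \ge\ K^{-1}\,|A \cap \tilde{\rejset}^{\eBH}|\, \min_{i \in A \cap \tilde{\rejset}^{\eBH}} \widetilde{E}_i \\
&\ \ge\ K^{-1}\,|A \cap \tilde{\rejset}^{\eBH}|\, \min_{i \in \tilde{\rejset}^{\eBH}} \widetilde{E}_i \\
&\ \ge\ \frac{|A \cap \tilde{\rejset}^{\eBH}|}{\alpha m}
\ =\ \frac{\FDP_A(\tilde{\rejset}^{\eBH})}{\alpha},
\end{align}
where the first step drops the nonnegative terms with $i \notin \tilde{\rejset}^{\eBH}$ and lower bounds the remaining ones by their minimum, the second enlarges the index set over which the minimum is taken, and the third inserts $\min_{i \in \tilde{\rejset}^{\eBH}} \widetilde{E}_i \ge K/(\alpha m)$ together with $\FDP_A(\tilde{\rejset}^{\eBH}) = |A \cap \tilde{\rejset}^{\eBH}|/m$. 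Since $A$ was arbitrary, this shows $\tilde{\rejset}^{\eBH} \in \overline{\mathcal{C}}$.

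Finally, because $\tilde{\rejset}^{\eBH} = R_{[:m]} \in \overline{\mathcal{C}}$, the index $m$ is feasible for the maximization \eqref{eq:k-cebh} defining $k^{\CeBH}$, hence $m \le k^{\CeBH}$, and therefore $\tilde{\rejset}^{\eBH} = R_{[:m]} \subseteq R_{[:k^{\CeBH}]} = \tilde{\rejset}^{\CeBH}$ (using one fixed tie-breaking rule for the ordering of $\widetilde{\mathbf{E}}$ throughout). I do not anticipate a genuine obstacle: the only points that need a word of care are the degenerate cases ($\tilde{\rejset}^{\eBH}$ empty, $A \cap \tilde{\rejset}^{\eBH}$ empty), the observation that the eBH output is a superlevel (hence prefix) set so that its minimum e-value is controlled by $K/(\alpha m)$, and maintaining a single consistent tie-breaking — all of which already appear in the proof of Proposition~\ref{prop:cebh-dominates-ebh}.
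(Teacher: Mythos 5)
Your proof is correct and follows essentially the same route as the paper's: show $\tilde{\rejset}^{\eBH} \in \overline{\mathcal{C}}$ via the chain $E_A = K^{-1}\sum_{i\in A}\widetilde E_i \ge K^{-1}|A\cap\tilde{\rejset}^{\eBH}|\min_{i\in\tilde{\rejset}^{\eBH}}\widetilde E_i \ge \FDP_A(\tilde{\rejset}^{\eBH})/\alpha$, then invoke maximality of $k^{\CeBH}$. Your version is just a bit more explicit about the degenerate cases and the prefix/tie-breaking bookkeeping, which the paper leaves implicit.
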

\begin{proof}
The proof is nearly identical to that of the proof of \Cref{prop:cebh-dominates-ebh}, albeit using the e-values defined in \eqref{eq:c-int-evalue}.
\begin{align}
    E_A = K^{-1}\sum\limits_{i \in A} \widetilde{E}_i
        \geq
        K^{-1} |A\cap \tilde{\rejset}| \min_{i \in A \cap \tilde{\rejset}} \widetilde{E}_i
        \geq
        K^{-1}|A\cap \tilde{\rejset}|\min_{i \in \tilde{\rejset}} \widetilde{E}_i.
    \label{eq:c-lb-ineq}
\end{align}
The first inequality is lower bounding each term in the sum and also dropping some positive terms. The second inequality is by strictly expanding the set which we take a minimum over.

We have the following result for $\tilde{\rejset}^{\eBH}$:
\begin{align}
E_A
\geq
 K^{-1} |A\cap \tilde{\rejset}^{\eBH}|\min_{i \in \tilde{\rejset}^{\eBH}} \widetilde{E}_i.
\geq
\frac{|A \cap \tilde{\rejset}^{\eBH}|}{\alpha |\tilde{\rejset}^{\eBH}|}
=
\frac{\FDP_A(\tilde{\rejset}^{\eBH})}{\alpha}.\label{eq:c-ineq-avg-lb}
\end{align}
The first inequality is by \eqref{eq:c-lb-ineq}, and the second inequality is by the definition of $\Rcal^\eBH$. \eqref{eq:c-ineq-avg-lb} implies that $\tilde{\rejset}^\eBH \in \overline{\mathcal{C}}$, and consequently we get that $\tilde{\rejset}^\eBH = \rejset_{k^\eBH} \subseteq \rejset_{k^\CeBH} = \tilde{\rejset}^\CeBH$. Thus, we have shown our desired result.

\end{proof}

We conjecture that the compound $\textCeBH$ procedure and standard compound eBH procedure are actually equivalent, as we have observed simulations where this is the case, and we will leave a rigorous investigation to future work.

\section{A closure principle for FDR control}\label{sec:closure}
Now, we will argue that \emph{any} FDR controlling procedure (i.e., including procedures that take inputs that are not e-values) can be improved by or reinterpreted as a procedure based on deriving e-values for every intersection hypotheses $\hypset_A$ for each $A \subseteq [K]$.
Let $\Rcal$ be a \emph{FDR controlling procedure} (with discovery set $R = \mathcal{R}(\mathbf{X})$) if it satisfies
\begin{align}
    \sup_{A \subseteq 2^{[K]}} \expect_{A}\left[\FDP_A(R)\right] \leq \alpha. \label{eq:fdr-control}
\end{align}

In this section, we now let $E_A$ denote any e-value for the intersection hypothesis $\hypset_A$.
Now, consider any \emph{e-closed} procedure $\overline{\Rcal}$, which is any procedure that produces a discovery set that satisfies $\overline{R} \in \overline{\mathcal{C}}$ for the resulting candidate set $\overline{\mathcal{C}}$ in~\eqref{eq:candidates}.

\begin{theorem}\label{prop:e-closed-fdr}
    Any e-closed procedure $\overline{\Rcal}$ (with corresponding discovery set $\overline{R} = \overline{\mathcal{R}}(\mathbf{X})$) which uses
\begin{align}
    E_A = \frac{\FDP_A(\rejset)}{\alpha} \label{eq:standard-ec-evalue}.
\end{align} to form a e-collection $(E_A)_{A \subseteq [K]}$ ensures that in $\rejset \in \overline{\mathcal{C}}$. Further, the dominant e-closed procedure $\overline{\Rcal}$ is equivalent to $\Rcal$, i.e., $\overline{R} = R$.

Thus, any FDR controlling procedure can be represented as a e-closed procedure.
\end{theorem}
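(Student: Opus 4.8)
The plan is to take the given FDR controlling procedure $\Rcal$, with output $R=\Rcal(\mathbf{X})$, and exhibit the explicit data-dependent e-collection \eqref{eq:standard-ec-evalue}, namely $E_A\coloneqq\FDP_A(R)/\alpha$ for every $A\subseteq[K]$, and then argue three things in turn: (i) this really is an e-collection; (ii) $R$ lies in the induced candidate set $\overline{\mathcal{C}}$; and (iii) $R$ is in fact the unique $\subseteq$-maximal element of $\overline{\mathcal{C}}$, so the dominant e-closed procedure built from it reproduces $\Rcal$ exactly.

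\emph{Step 1 (validity of the e-collection).} Fix $A\subseteq[K]$. Since $\expect_A$ is a supremum of (nonnegative-scalar-)linear functionals, $\expect_A[E_A]=\alpha^{-1}\,\expect_A[\FDP_A(R)]$, and the FDR-control hypothesis \eqref{eq:fdr-control} bounds $\expect_A[\FDP_A(R)]\le\alpha$; hence $\expect_A[E_A]\le 1$. Thus each $E_A$ is a bona fide e-value for $\Hcal_A$ (read, as always, in the integrated sense $\expect_{\mathcal P}[E_A]\le1$ for $\mathcal P\in\Hcal_A$), so $(E_A)_{A\subseteq[K]}$ is a legitimate e-collection on which to run an e-closed procedure.

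\emph{Step 2 ($R$ lies in its own candidate set).} With this e-collection the set \eqref{eq:candidates} reads $\overline{\mathcal{C}}=\{R'\subseteq[K]:\FDP_A(R')\le\FDP_A(R)\ \forall A\subseteq[K]\}$, which plainly contains $R'=R$ (the defining inequalities hold with equality). So $R\in\overline{\mathcal{C}}$ pointwise in the data, i.e.\ $\Rcal$ is itself an e-closed procedure; in particular it controls FDR by \Cref{prop:cebh_FDR} (its general-$E_A$ version, proved identically: $\FDP_{A^*}(R)\le\alpha E_{A^*}$ pointwise and $\expect_{A^*}[E_{A^*}]\le1$).

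\emph{Step 3 ($R$ is the dominant certificate) and main obstacle.} For any $R'\in\overline{\mathcal{C}}$ and any $j\notin R$, instantiating the defining constraint at the singleton $A=\{j\}$ gives $\ind\{j\in R'\}/(|R'|\vee1)=\FDP_{\{j\}}(R')\le\alpha E_{\{j\}}=\FDP_{\{j\}}(R)=0$, forcing $j\notin R'$; hence $R'\subseteq R$. Combined with $R\in\overline{\mathcal{C}}$ from Step 2, $R$ is the unique $\subseteq$-maximal element of $\overline{\mathcal{C}}$, so the dominant e-closed procedure $\overline{\Rcal}$ outputs $\overline{R}=R$, which is the claimed representation. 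I do not expect a real obstacle here: the content is short, and the only things needing care are bookkeeping --- the $0/0=0$ convention and the $|R'|\vee1$ in $\FDP$ so that Step 3 still goes through when $R'=\emptyset$ or when $\FDP_A(R)=0$ (then $E_A=0$ forces $\FDP_A(R')=0$), and the degenerate $\alpha=0$ case, where FDR control forces $\FDP_{A^*}(R)=0$ a.s.\ and should be handled separately. I would also flag, as a remark rather than a gap, that this representation is non-constructive, since $E_A$ depends on $R$ itself; genuine power gains (as with $\textCeBH$) come from instead using an e-collection computable from the raw data, such as the arithmetic-mean collection \eqref{eq:avg-e-value}.
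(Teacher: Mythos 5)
Your proof is correct and follows the same overall structure as the paper's: validate the e-collection via \eqref{eq:fdr-control}, observe that $R$ satisfies its own defining constraints with equality so $R\in\overline{\mathcal{C}}$, and then show that no other set can be a larger candidate. The one place you genuinely diverge is the maximality step, and your version is the better one. The paper excludes a competitor $R'$ with $|R'|>|R|$ by testing the constraint at $A=(R\cap R')\cup A'$ for $\emptyset\subset A'\subseteq R'\setminus R$ and asserting the strict inequality $\FDP_A(R)<\FDP_A(R')$; that strict inequality can fail (take $R=\{1\}$, $R'=\{1,2\}$, $A=\{1,2\}$: both FDPs equal $1$), so the paper's choice of $A$ does not always witness the exclusion. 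Your singleton instantiation $A=\{j\}$ for $j\notin R$ gives $E_{\{j\}}=\FDP_{\{j\}}(R)/\alpha=0$ while $\FDP_{\{j\}}(R')>0$ whenever $j\in R'$, which cleanly forces $R'\subseteq R$ for every $R'\in\overline{\mathcal{C}}$ and is immune to this issue. Combined with $R\in\overline{\mathcal{C}}$, this identifies $R$ as the unique $\subseteq$-maximal candidate, which is all the theorem needs (the paper additionally asserts that $\emptyset$ and $R$ are the \emph{only} candidates; you do not claim this, but you also do not need it). Your attention to the $0/0$ conventions, the $|R'|\vee 1$ normalization, and the degenerate $\alpha=0$ case is appropriate, as is the closing remark that the representation is non-constructive because $E_A$ depends on $R$ itself.
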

\begin{proof}
Note that $E_A$ is a valid e-value for $\hypset_A$, since $\expect_A[\FDP_A(\rejset)] \leq \alpha$ by $\rejset$ being a FDR controlling procedure. Furthermore, we can see that
\begin{align}
    \max_{A \subseteq 2^{[K]}} \frac{\FDP_A(\rejset)}{E_A} =  \max_{A \subseteq [K]} \frac{\FDP_A(\rejset)}{\FDP_A(\rejset) / \alpha}  = \alpha,
\end{align}
when we treat $0 / 0 = 1$ in the above division. Thus, $\rejset \in \overline{\mathcal{C}}$. As a result, there exists a e-closed procedure which can always choose $\overline{R} \in \overline{\mathcal{C}}$. On the other hand, for every $R' \subseteq [K]$ with $|R'| > |R|$, we have any $A = R \cap R' \cup A'$ where $\emptyset \subset A' \subseteq R' \setminus R$ has $E_A = \FDP_A(R) / \alpha < \FDP_A(R') / \alpha$. Similarly, if $0 < |R'| < |R|$, we can choose $A' \in R \setminus R'$ instead and note the same inequality. Thus, $R$ and $\emptyset$ are the only sets in $\overline{\mathcal{C}}$. As a result, letting the e-closed procedure always outputting $R$ is the dominant procedure.

\end{proof}

\paragraph{Closure for post-hoc valid procedures.} We also formulate an alternative way to form an e-collection for the e-closed procedure from procedures with post-hoc FDP control (in the sense of \eqref{eq:post-hoc-valid}). Here, let $\mathcal{D}$ be a procedure that produces a family of discovery sets $(R_\beta)_{\beta \in [0, 1]}$, and $\mathcal{D}$ is \emph{post-hoc FDP controlling procedure} if it satisfies
\begin{align}
    \expect\left[\sup_{\beta \in [0, 1]}\frac{\FDP_{A^*}(R_\beta)}{\beta}\right] \leq 1.
\end{align} We can define the following e-values for each $A \subseteq [K]$:
\begin{align}
    \bar{E}_A = \sup_{\beta \in [0, 1]} \frac{\FDP_{A}(R_\beta)}{\beta}. \label{eq:post-hoc-closed-evalue}
\end{align}
\begin{theorem}
    For an arbitrary post-hoc procedure $\mathcal{D}$, the e-closed procedure $\overline{\Rcal}$ with e-collection $(\bar{E}_A)_{A \subseteq [K]}$ that ouputs $\overline{R}$ recovers or improves upon the FDR controlling procedure $\Rcal$ where $R = \Rcal(\mathbf{X}) = \mathcal{D}(\mathbf{X})_\alpha$ for a specific $\alpha \in [0, 1]$.
\end{theorem}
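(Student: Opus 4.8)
The plan is to verify that the particular discovery set $R = \mathcal{D}(\mathbf{X})_\alpha$ already belongs to the candidate family $\overline{\mathcal{C}}$ built from the e-collection $(\bar{E}_A)_{A \subseteq [K]}$ of \eqref{eq:post-hoc-closed-evalue}. Once this is established, the dominant e-closed procedure returns some $\overline{R} \supseteq R$, which is exactly the claim that $\overline{\Rcal}$ recovers or improves upon $\Rcal$, while Proposition~\ref{prop:cebh_FDR} certifies that it still controls FDR at level $\alpha$ — provided the $\bar{E}_A$ are genuine e-values.

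First I would check that each $\bar{E}_A$ is a valid e-value for $\hypset_A$, which is what licenses its use inside an e-closed procedure. Fix $A \subseteq [K]$ and an arbitrary $\mathcal{P} \in \hypset_A$, and let $A^*$ be the true null set under $\mathcal{P}$, so that $A \subseteq A^*$. Since $|R_\beta \cap A| \le |R_\beta \cap A^*|$ for every realization and every $\beta \in [0,1]$, we get the pointwise bound $\FDP_A(R_\beta) \le \FDP_{A^*}(R_\beta)$, hence $\bar{E}_A \le \sup_{\beta \in [0,1]} \FDP_{A^*}(R_\beta)/\beta$. Taking $\expect_{\mathcal{P}}$ on both sides and invoking the defining property of a post-hoc FDP controlling procedure gives $\expect_{\mathcal{P}}[\bar{E}_A] \le 1$; taking the supremum over $\mathcal{P} \in \hypset_A$ yields $\expect_A[\bar{E}_A] \le 1$.

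Next I would show $R \in \overline{\mathcal{C}}$, which is immediate from the definition of $\bar{E}_A$ as a supremum over $\beta \in [0,1]$: specializing to $\beta = \alpha$ gives $\bar{E}_A \ge \FDP_A(R_\alpha)/\alpha = \FDP_A(R)/\alpha$ for every $A \subseteq [K]$, which is precisely the membership condition \eqref{eq:candidates}. Consequently the dominant e-closed procedure — which outputs a largest element of $\overline{\mathcal{C}}$, or, more conservatively, could simply output $R$ itself — returns $\overline{R}$ with $\overline{R} \supseteq R$; when the containment is strict this is a genuine improvement, and otherwise $\overline{R} = R$ recovers $\Rcal$ exactly.

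The only substantive step is the e-value verification in the second paragraph. The post-hoc guarantee is phrased in terms of the unknown true null set $A^*$, whereas an e-value for $\hypset_A$ must have expectation at most one under \emph{every} $\mathcal{P} \in \hypset_A$, including distributions for which $A$ is a strict subset of the true null set. The monotonicity $\FDP_A \le \FDP_{A^*}$ whenever $A \subseteq A^*$ is exactly what closes this gap, and it is the reason construction \eqref{eq:post-hoc-closed-evalue} produces legitimate e-values; everything else is bookkeeping.
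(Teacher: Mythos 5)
Your proof is correct and follows essentially the same route as the paper's: you observe that the $\beta=\alpha$ term in the supremum gives $\bar{E}_A \ge \FDP_A(R)/\alpha$, so $R \in \Cclosed$ and the dominant e-closed procedure outputs $\overline{R} \supseteq R$. The one place you go further is in carefully verifying that each $\bar{E}_A$ is a genuine e-value for $\hypset_A$ via the monotonicity $\FDP_A \le \FDP_{A^*}$ for $A \subseteq A^*$ — a step the paper asserts without proof — and that verification is correct and worth making explicit.
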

\begin{proof}
    The theorem follows from $(\bar{E}_A)_{A \subseteq[K]}$ being a valid e-collection due to $\mathcal{D}$ having post-hoc FDP control. Further, we note that $\bar{E}_A \geq E_A$ (as defined in \eqref{eq:standard-ec-evalue}) since $E_A$ is included in the supremum taken in $\bar{E}_A$. As a result, the resulting e-closed procedure $\overline{R}$ provides FDR control by outputting a discovery set in $\overline{R}\in \Cclosed$ s.t.\ $\overline{R} \supseteq R$ --- this set always exists since $R \in \Cclosed$. Thus, we have shown our desired result.
\end{proof}
Let us briefly compare the construction of the e-collection in \eqref{eq:post-hoc-closed-evalue} to the one for \eqref{eq:standard-ec-evalue}. Notably, in \eqref{eq:post-hoc-closed-evalue}, the e-values do not depend upon $\alpha$. Further if eBH is plugged into the e-closed procedure applied to e-values from \eqref{eq:standard-ec-evalue}, it cannot be improved. However, every e-value in \eqref{eq:post-hoc-closed-evalue} is at least as large, and typically larger, than the corresponding e-value defined in \eqref{eq:standard-ec-evalue} (as the latter is included in the supremum of the former). As a result, typically, taking the closure actually can increase the power over the original procedure. Further, e-collection of \eqref{eq:standard-ec-evalue} sets $E_i = 0$ for $i \not\in R$. On the other hand, the e-collection of \eqref{eq:post-hoc-closed-evalue} constructs $E_i$ s.t.\ $E_i > 0$ is possible for $i\not\in R$, and they are usually positive values. Thus, the e-closed procedure derived from \eqref{eq:post-hoc-closed-evalue} could improve over the FDR controlling procedure $\Rcal$ used to derive the e-collection.

\section{Closed testing for general error metrics}\label{sec:general-closure}
Suppose we have a general error metric $\F=(\F_A)_{A\subseteq [K]}$, which means that $\F_{A}$ is a (possibly random) function that maps any rejection set $R$ to some nonnegative real number and we aim to find an $R$ such that
\begin{align}\sup_{A \subseteq 2^{[K]}} \expect_{A}\left[\F_A(R)\right] \leq \alpha.\label{eq:general_control}\end{align}
For example, in case of FDR control, we have $\F_A=\text{FDP}_A$. However, there are many other error rates this general setup includes:
\begin{itemize}
    \item k-familywise error rate (k-FWER): $\F_A(R)=\ind\{|A\cap R|\geq k\}$.
    \item Per-family wise error rate (PFER): $\F_A(R)=|A\cap R|$.
    \item False discovery exceedance (FDX): $\F_A(R)=\ind\{\FDP_A(R)> \gamma\}$ for some $\gamma\in (0,1)$.
\end{itemize}

Analogous as before, given an e-value $E_A$ for each $A\subseteq [K]$, define the set of candidate discovery sets for the error metric $\F$ by
\begin{align}
\label{eq:candidates_general}
\overline{\mathcal{C}}(F)\coloneqq \left\{\rejset \subseteq [K]:
E_A \geq \frac{\F_A(\rejset)}{\alpha} \quad \forall {A \subseteq [K]}
\right\}.
\end{align}

The validity and admissibility of this general closure principle follows exactly as in Proposition~\ref{prop:cebh_FDR} and Theorem~\ref{prop:e-closed-fdr}, respectively.

\begin{theorem}\label{theo:CP_general}
    Any set $R\in \overline{\mathcal{C}}(F)$ satisfies \eqref{eq:general_control}. Furthermore, for any rejection set $R$ and $E_A=F_A(R) / \alpha$, we have $R\in \overline{\mathcal{C}}(F)$. Hence, every procedure can be obtained as a closed procedure.
\end{theorem}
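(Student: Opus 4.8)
The plan is to reuse, essentially verbatim, the two arguments already given in the FDR special case --- Proposition~\ref{prop:cebh_FDR} for validity and Theorem~\ref{prop:e-closed-fdr} for representability --- with $\FDP_A$ replaced by the abstract metric $\F_A$. The only structural facts about the error metric that will be used are that each $\F_A(R)$ is nonnegative and that the candidate-set inequality in \eqref{eq:candidates_general} is a pointwise (deterministic) statement; everything else is carried by the single e-value property $\expect_A[E_A]\le 1$.

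For the validity half, I would fix any $R\in\overline{\mathcal{C}}(\F)$, reading this (as in Proposition~\ref{prop:cebh_FDR}) as the guarantee that the procedure's output set lies in the candidate set for every realization of the data. The definition in \eqref{eq:candidates_general} then gives the deterministic inequality $\F_A(R)\le\alpha\,E_A$ for every $A\subseteq[K]$; when $\F_A(R)=0$ this is just $E_A\ge 0$, so no $0/0$ convention is needed in this direction. Fixing $A$ and an arbitrary $\Pcal\in\Hcal_A$, integrating under $\Pcal$, and using $\expect_\Pcal[E_A]\le 1$ yields $\expect_\Pcal[\F_A(R)]\le\alpha$; taking the supremum over $\Pcal\in\Hcal_A$ and then over $A$ gives exactly \eqref{eq:general_control}. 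The one thing to watch is that $\F_A$ may be random and $R$ data-dependent, so the candidate-set inequality must be applied pointwise before taking expectations.

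For the representability half, given any procedure $\Rcal$ whose output $R=\Rcal(\mathbf{X})$ satisfies \eqref{eq:general_control}, I would define the e-collection $E_A\coloneqq\F_A(R)/\alpha$ for each $A\subseteq[K]$ and verify two points. First, each $E_A$ is a legitimate e-value for $\Hcal_A$: it is nonnegative since $\F_A$ is, and for any $\Pcal\in\Hcal_A$ we have $\expect_\Pcal[E_A]=\expect_\Pcal[\F_A(R)]/\alpha\le\expect_A[\F_A(R)]/\alpha\le 1$ by \eqref{eq:general_control}. Second, with this choice the defining inequality of $\overline{\mathcal{C}}(\F)$ reads $\F_A(R)/\alpha\ge\F_A(R)/\alpha$, an equality for every $A$, hence $R\in\overline{\mathcal{C}}(\F)$. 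Consequently the e-closed procedure that simply outputs $\overline{R}=R$ is valid, so every $\F$-controlling procedure is realized as a closed procedure.

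I do not anticipate a genuine obstacle; the content is a direct abstraction of the FDR argument, with $\F$ entering only through nonnegativity and the bound \eqref{eq:general_control}. The points I would flag rather than belabor are: (i) stating the conventions for $\overline{\mathcal{C}}(\F)$ when $\F_A(R)=0$ (and, if one writes the ratio form $\F_A(R)/E_A\le\alpha$, setting $0/0=1$ consistently with the FDR section); and (ii) that, unlike Theorem~\ref{prop:e-closed-fdr}, here I only claim $R$ is \emph{attainable} by some e-closed procedure, not that it is the \emph{dominant} (unique nontrivial) element of $\overline{\mathcal{C}}(\F)$ --- the stronger statement needs an additional monotonicity-type property of $\F$ (the ``singleton in $R'\setminus R$'' argument from the proof of Theorem~\ref{prop:e-closed-fdr} still forces $R'\subseteq R$ for metrics like FWER and PFER, but can fail, e.g., for $k$-FWER with $k\ge 2$), so I would confine any dominance remark to such monotone metrics.
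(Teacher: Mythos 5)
Your proof is correct and follows exactly the route the paper intends: the paper gives no separate argument for Theorem~\ref{theo:CP_general}, stating only that validity and representability ``follow exactly as in'' Proposition~\ref{prop:cebh_FDR} and Theorem~\ref{prop:e-closed-fdr}, which is precisely what you flesh out. Your caveat that the \emph{dominance} claim of Theorem~\ref{prop:e-closed-fdr} need not transfer to non-monotone metrics is a sensible and accurate refinement, and it is consistent with the theorem statement, which only asserts attainability.
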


This general formulation has the further advantage that it allows the post-hoc choice of the error metric. Let $\mathcal{F}$ be the set of all error metrics under consideration and suppose for every $\F\in \mathcal{F}$, we pick a specific closed procedure $\overline{\rejset}_F\in \overline{\mathcal{C}}(F)$  (usually the largest set among all candidates). Then, we have
\begin{align}\sup_{A \subseteq 2^{[K]}} \expect_{A}\left[\sup_{\F\in \mathcal{F}} \F_A(\overline{\rejset}_F)\right] \leq \sup_{A \subseteq 2^{[K]}} \expect_{A}\left[E_A\right] \alpha \leq \alpha.\end{align}
Hence, we can choose the error metric and thus the corresponding rejection set $\overline{\rejset}_F$ post-hoc.

\begin{example}[Deciding between e-Holm and $\CeBH$ post-hoc\label{example:Holm}]
      Let $E_A=|A|^{-1} \sum_{i \in A} E_i$ be defined as in Section~\ref{sec:closed-ebh}. We know that in case of $\F_A=\text{FDP}_A$, the resulting closed procedure is given by $\CeBH$. Now suppose we seek for FWER control and thus have $\F_A(R)=\ind\{|A\cap R|> 0\}$. Then,
      \begin{align}\overline{\mathcal{C}}(F)&= \left\{\rejset \subseteq [K]:
E_A \geq \frac{\F_A(\rejset)}{\alpha} \quad \forall {A \subseteq [K]}
\right\}\\
&=\left\{\rejset \subseteq [K]:
E_A \geq \frac{1}{\alpha} \text{ for all } A \text{ with } A\cap R\neq \emptyset
\right\}.\end{align}
Therefore, the largest rejection set $\overline{\rejset}\in \overline{\mathcal{C}}(F)$ contains all indices $i\in [K]$ such that $E_A\geq 1/\alpha$ for all $A$ with $i\in A$. This is the e-Holm procedure \citep{vovk_e-values_calibration_2021, hartog_family-wise_error_2025}. Consequently, we could decide based on the data if we want to control the FDR with the $\CeBH$ procedure or to control the FWER with the e-Holm procedure without violating the guarantee.
  \end{example}

\begin{remark}
    Our general closure principle recovers the existing closure principles for FWER control \citep{marcus_closed_testing_1976} and simultaneous FDP control \citep{goeman2011multiple} by choosing $E_A=\phi_A/\alpha$ for some level-$\alpha$ intersection tests $\phi_A$, $A\subseteq [K]$.

    To see this, note that in the FWER case, $\F_A(R)=\ind\{|A\cap R|>0\}$, the largest set $\overline{\rejset}\in \overline{\mathcal{C}}(F)$ contains all indices $i\in [K]$ with $E_A\geq \alpha^{-1} \Leftrightarrow \phi_A=1$ (see also Example~\ref{example:Holm}), which is precisely the closure principle for FWER control \citep{marcus_closed_testing_1976}.

    For simultaneous FDP control,  fix a candidate rejection set $R\subseteq [K]$,
 set $F_A^{d}(R)=\ind \{|R\setminus A| <d\}$ and define
    \begin{align}
    \boldsymbol{d}(R)&\coloneqq\max\{d\in \{0,\ldots,K\}: \phi_A\geq F_A^{d}(R) \quad  \forall A\subseteq [K]\} \\
    &=\min\{|R\setminus A|:  A\subseteq [K], \phi_A=0\},
    \end{align}
    then $\boldsymbol{d}(R)$ is the closed procedure for simultaneous true discovery guarantee, which is equivalent to simultaneous FDP control \citep{goeman2021only}. By the simultaneous control of our closure principle over all error rates $F^d$, $d\in \{0,\ldots,K\}$, and all rejection sets $R\in \overline{\mathcal{C}}(F^d)$, we obtain the simultaneous true discovery of $\boldsymbol{d}$ using our general closure principle
    \begin{align}
       & \mathbb{E}_{A^*}\left[\sup_{d \in \{0,\ldots,K\}} \sup_{R\in \overline{\mathcal{C}}(F^d)} F_{A^*}^d(R)\right] \leq \alpha \\
        \implies \ &\mathbb{P}_{A^*}(\exists R\subseteq [K]: \boldsymbol{d}(R)>|R\setminus A^*|)\leq \alpha.
    \end{align}
\end{remark}

\begin{remark}
The offline multiple testing problem can be viewed as a composite generalized Neyman-Pearson (GNP) problem \citep{grunwald_neyman-pearson_e-values_2024} for different loss functions.
In particular, we can view a discovery procedure, $\Rcal$ as the maximally compatible decision rule for the multiple testing problem w.r.t.\ the e-collection derived in \Cref{prop:e-closed-fdr} for FDR control. However, admissibility results for maximally compatible decision rules from \citet{grunwald_neyman-pearson_e-values_2024} do not directly apply to our setting since their assumption on action ``richness'' (continuity of loss --- for FDP or other discrete multiple testing error metrics  --- w.r.t.\ the action space) is violated in multiple testing, as there are only a finite set of values that the FDP can take.
\end{remark}

\section{Extensions}\label{sec:extensions}

We discuss some extensions to the $\textCeBH$ procedure. These extensions primarily highlight the fact that the advantages of e-value procedures are retained for our procedure as well.

\paragraph{Randomization.} \citet{xu_more_powerful_2023} introduced stochastic rounding of e-values to use randomization to improve e-value based hypothesis tests as well as the eBH procedure. We can apply stochastic rounding individually to each intersection e-value $E_A$ for $A \subseteq [K]$, as follows.

Recall that $R_{[:k]}$ denotes the set of hypotheses corresponding to the $k$ largest e-values, and solve for $\hat k:= k^\CeBH$ from~\eqref{eq:k-cebh}. Then, define data-dependent test levels
\begin{align}
\hat\alpha_{A, k} \coloneqq \left(\frac{\FDP_A(R_{[:k]})}{\alpha}\right)^{-1}
\end{align}
for each $A \subseteq [K], k \in [K]$. Finally, define
\begin{align}
S_{A}(E_A) &= \begin{cases}
    \hat\alpha_{A, \hat k}^{-1} & \text{ if }\hat\alpha_{A, \hat k + 1} > \hat\alpha_{A, \hat k}\text{ or }U > \frac{E_A - \hat\alpha^{-1}_{A, \hat k}}{\hat\alpha^{-1}_{A, \hat k + 1} - \hat\alpha^{-1}_{A, \hat k}}\\
    \hat\alpha_{A, \hat k + 1}^{-1} & \text{ otherwise}
\end{cases}.
\end{align}
One can then check that $S_A(E_A)$ is an e-value where $U$ is uniform random variable on $[0, 1]$ that is independent of everything else.

Note that $S_A(E_A) \geq \hat\alpha^{-1}_{A, k^\CeBH}$ almost surely, which means that $\textCeBH$ using the e-collection $(S_A(E_A))_{A \subseteq [K]}$ will never make fewer discoveries. Further, if $\min_{A \subseteq [K]} E_A - \hat\alpha^{-1}_{A, k^{\CeBH}} > 0$, there is positive probability that $k^\CeBH + 1$ discoveries can be made. In such a case, the randomized $\textCeBH$ procedure is strictly more powerful than the $\textCeBH$ procedure.

\paragraph{Boosting.} \citet{wang_false_discovery_2022} observed that one can increase the power of an e-value when testing against a finite set of rejection thresholds. In the most general setting of assuming arbitrary dependence, if one knows the marginal distribution of an e-value under the null, one can calculate a boosting factor $b$ for an e-value $E$ as follows $$b \coloneqq \max\ \{b' \in [1, \infty): \expect[T(b' E)] \leq 1\}$$ where the expectation is taken under the known null distribution. Here, $T$ is a truncation function that takes the largest element in a predetermined set that is less than or equal to the input.

In our setting, one simple way to boost e-values is to define a different truncation function $T_A$ for each $A \subseteq [K]$ as follows: $$T_A(x) \coloneqq \max\ \{r / (\alpha k) : r / (\alpha k) \leq x, k \in [K], r \in [k \wedge |A|]\} \cup \{0\},$$
where $a \wedge b$ is the minimum of $a$ and $b$. This is a novel truncation function unlike ones used in boosting for the eBH procedure, and is derived from restricting the support of $E_A$ to only the values $\FDP_A(R) / \alpha$ can take. Let $b_A$ be the boosting factor for $E_A$ with truncation function $T_A$ for each $A \subseteq [K]$. Then, one can notice that $T_A(b_A E_A)$ is a valid e-value and is at least as powerful as $E_A$ in the sense that $\FDP_A(R) / \alpha \leq E_A$ implies $\FDP_A(R) / \alpha \leq T_A(b_A E_A)$ for all $R \subseteq [K]$.

Using the above boosted e-values immediately yields that the boosted $\textCeBH$ procedure that dominates the original.
Other boosting techniques may be possible given more dependence assumptions or knowledge about the joint distribution of the entire e-collection $(E_A)_{A \subseteq [K]}$.

\paragraph{Alternative e-merging functions.} While we used the average e-merging function in our formulation of the $\textCeBH$ procedure, one can use other e-merging functions when stronger dependence assumptions, i.e., independence or sequential dependence, are made about the e-values.
For example, if we assume all null e-values are independent, one can let the e-values for each intersection hypothesis be the product of the base hypothesis e-values, i.e.,
\begin{align}
    E_A = \prod\limits_{i \in A} E_i.
\end{align} One can then proceed to define a $\textCeBH$ procedure for the product e-merging function in a similar fashion as we did in \Cref{sec:closed-ebh}, along with a corresponding algorithm, by using the logarithm to turn the products into sums when computing e-values for the intersection hypotheses.

\section{Numerical simulations}\label{sec:simulations}
\begin{figure}[htbp]
    \begin{subfigure}{0.29\textwidth}
        \includegraphics[trim=0 0 3.2cm 0,clip,width=\textwidth]{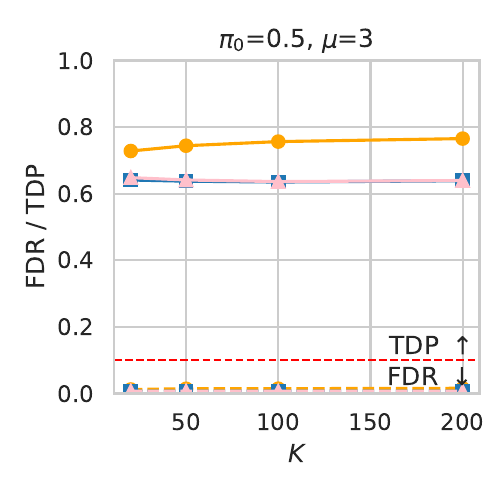}
    \end{subfigure}\begin{subfigure}{0.29\textwidth}
        \includegraphics[trim=0 0 3.2cm 0,clip,width=\textwidth]{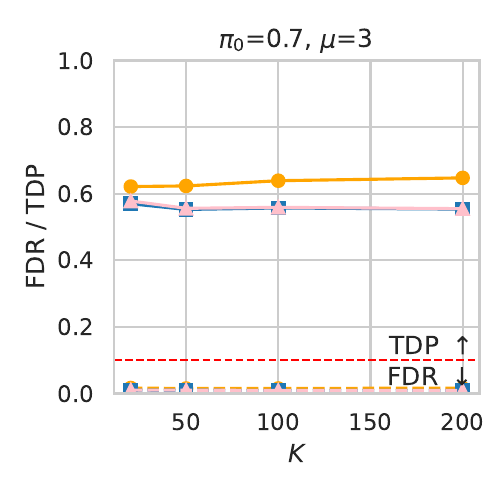}
    \end{subfigure}\begin{subfigure}{0.29\textwidth}
        \includegraphics[trim=0 0 3.2cm 0,clip,width=\textwidth]{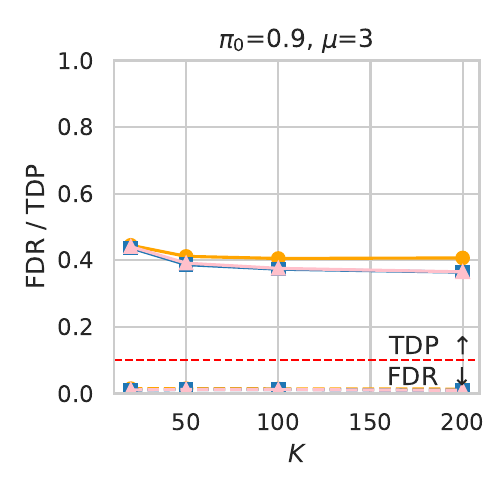}
    \end{subfigure}\hfill\begin{subfigure}{0.13\textwidth}
        \includegraphics[trim=8.2cm 0 0 0,clip,width=\textwidth]{figures/simple/null0.9_signal3.pdf}
    \end{subfigure}
    \caption{Plot of FDR and TPR for different null proportions $\pi_0$ and alternative mean $\mu = 3$. All differences are significant since the closed eBH ($\overline{\eBH}$) procedure dominates the eBH and minimally adaptive eBH (eBHm) procedures.}
    \label{fig:gaussian-sim}
\end{figure}

We perform simulations\footnote{Code at \url{https://github.com/neilzxu/closed-ebh}} for independent Gaussians where we let $\pi_0 \coloneqq |A^*| / K$ be the null proportion, $\mu$ be the signal of the alternatives and let $X_i \sim \mathcal{N}(0, 1)$ for $i \in A^*$ and $X_i \sim \mathcal{N}(\mu, 1)$ otherwise for $i \in [K] \setminus A^*$.
We let our e-value be $E_i \coloneqq \exp(\lambda X_i - \lambda^2 / 2)$ for $\lambda = \mu$ (the log-optimal choice).
We let $\alpha = 0.1$ in our simulations. We plot the empirical average FDR and true positive rate (TPR) $\coloneqq \expect[|(A^*)^c \cap \rejset| / |(A^*)^c|]$ of eBH, minimally adaptive eBH \citep{ignatiadis_e-values_unnormalized_2022}, and $\textCeBH$ over $n = 1000$ trials.
We plot the results in \Cref{fig:gaussian-sim}, and can see that the $\textCeBH$ procedure improves noticeably over the standard and minimally adaptive eBH procedures across all settings, while controlling the FDR below $\alpha$.

\begin{figure}[htbp]
    \begin{subfigure}{0.29\textwidth}
        \includegraphics[trim=0 0 3.2cm 0,clip,width=\textwidth]{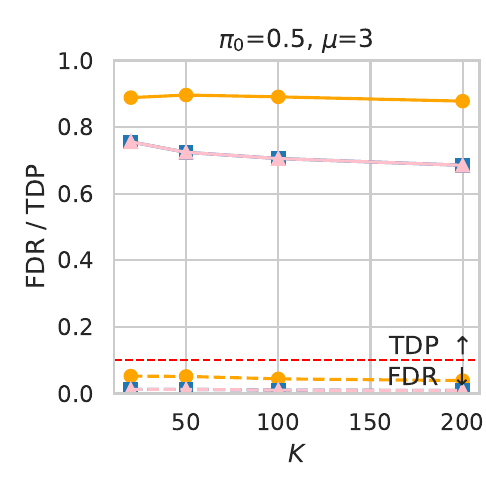}
    \end{subfigure}\begin{subfigure}{0.29\textwidth}
        \includegraphics[trim=0 0 3.2cm 0,clip,width=\textwidth]{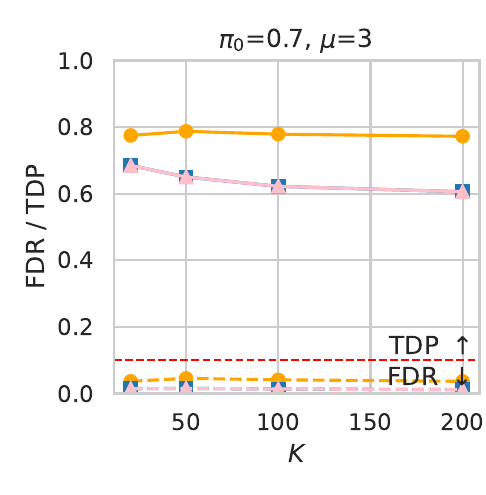}
    \end{subfigure}\begin{subfigure}{0.29\textwidth}
        \includegraphics[trim=0 0 3.2cm 0,clip,width=\textwidth]{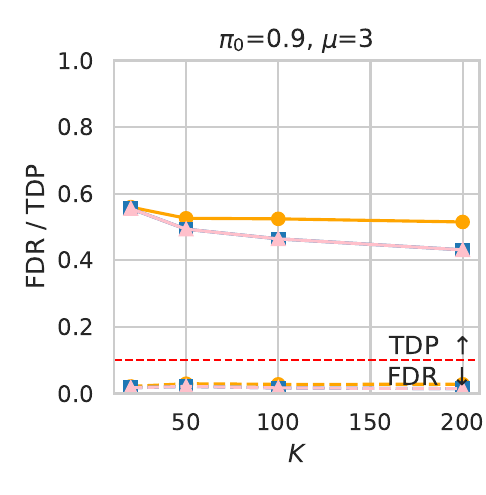}
    \end{subfigure}\hfill\begin{subfigure}{0.13\textwidth}
        \includegraphics[trim=8.2cm 0 0 0,clip,width=\textwidth]{figures/simple-BY/null0.9_signal3.pdf}
    \end{subfigure}
    \caption{Plot of FDR and TPR for different null proportions $\pi_0$ and alternative mean $\mu = 3$. All differences are significant since the closed BY ($\textcBY$) procedure dominates other variants of the BY procedure.}
    \label{fig:by-sim}
\end{figure}
We also evaluate the BY procedure, the application of minimally adaptive eBH to the e-values defined in \eqref{eq:by-calib-evalues}, and the $\textcBY$ procedure where we set $P_i \coloneqq 1 - \Phi(X_i)$ and $\Phi$ is the c.d.f.\ of the standard normal distribution. Here, we draw $(X_1, \dots, X_K)$ from a multivariate normal with the same component means as described above and with Toeplitz-like covariance matrix where $|\text{Cov}(X_i, X_j)| = \exp(-|i - j| / 10) / 5$ with the covariance being positive if $|i - j|$ is even and negative otherwise. Similarly, our results in \Cref{fig:by-sim} show that $\textcBY$ has substantial power increase over the other BY procedures.

\section{Related work}\label{sec:related-work}
We will briefly describe the two lines of research and prior art that are most relevant to our current work. Aside from these, there is a substantial literature using e-values and the eBH procedure for FDR control --- \citet{ramdas_hypothesis_testing_2025} performs a thorough overview of such developments.

\paragraph{Closed testing and e-values.} There is a vast amount of literature for FWER control and simultaneous (over all potential discovery sets) FDP bounds using closed testing --- see overviews presented in \citet[Chapters 9 and 14]{cui_handbook_multiple_2021} and \citet{henning_closed_testing_2015}. We will point out two specific works that are particularly relevant to ours since they also utilize e-values in the context of closed testing.
The first is \citet{vovk_confidence_discoveries_2023a}, who formulate a method for FWER control via the arithmetic mean e-merging function, and consequently also simultaneous FDP bounds.
\citet{hartog_family-wise_error_2025} extend the benefit of using arithmetic mean e-merging functions to graphical approaches for FWER control \citep{bretz_graphical_approach_2009}, and derive computationally efficient algorithms for e-value versions of the Fallback \citep{wiens_fallback_procedure_2005} and \citet{holm_simple_sequentially_1979} procedures.
Our work proposes new procedures the FDR metric (rather than FWER or probabilistic bounds on the FDP) which has not been connected to intersection hypothesis testing or e-merging functions before, and we show existing procedures (such as e-Holm) can be formualted using the generalized closure principle we develop in \Cref{sec:general-closure}.

\paragraph{Boosting the power of eBH} Another line of work that is pertinent is various approaches to boosting the eBH procedure so that more discoveries are made. There are generally two areas in this line: (1) boosting with no assumptions on the e-value distributions and (2) boosting e-values by using explicit knowledge about the underlying (marginal or joint) distribution.

For the first area, we have already discussed the connection between our work and the stochastic rounding framework of \citet{xu_more_powerful_2023} in \Cref{sec:extensions}, as well as the minimally adaptive eBH procedure of \citet{ignatiadis_e-values_unnormalized_2022} in \Cref{sec:closed-ebh}.

The second area includes two general types of assumptions.
\begin{itemize}
    \item \emph{Marginal assumptions:} Along with introducing the eBH procedure, \citet{wang_false_discovery_2022} also outlined how one can boost e-values if one knows the explicit marginal distribution of each e-value, and we discuss a simple variant for boosting the $\textCeBH$ procedure in \Cref{sec:extensions}. \citet{blier-wong_improved_thresholds_2024} relaxed the necessary knowledge on marginal distributions for boosting by deriving boosting thresholds for certain nonparametric classes of e-values.
    \item \emph{Conditional assumptions:} \citet{lee_boosting_e-bh_2024a,lee_full-conformal_novelty_2025} improve the power of eBH in the setting where one knows the distribution of all $K$ e-values when conditioned a sufficient statistic calculated from the data under a singular null (i.e., a specific hypothesis is null, and all other hypotheses may be either null or non-null). This can also be viewed as a framework for deriving compound e-values from e-values when their joint distribution is known. Boosting factors can be calculated for each e-value via exact knowledge of these conditional distributions.
\end{itemize}
Developments in this area are orthogonal to our contribution. These boosting methods all require additional knowledge about the underlying distributions. On the other hand, the $\textCeBH$ procedure requires no assumptions on the e-values, and can be applied with valid FDR control in all situations where the eBH procedure can be applied. Potentially, these boosting methods can be used in conjunction with $\textCeBH$ to further improve its power.

\section{Conclusion}\label{sec:conclusion}

We have shown in this manuscript that the eBH procedure is inadmissible, and can be improved through a FDR closure principle that results in the $\textCeBH$ procedure. We provide a computationally efficient way of computing the $\textCeBH$ discovery set. Further, our FDR closure principle dictates that all FDR controlling procedures rely on e-values for testing each possible intersection hypothesis of the base hypotheses. Moreover, we generalize our FDR closure principle to general error metrics, which particularly allows to choose the error metric post-hoc.
These results open many lines of future work
.

\bibliography{ref}
\end{document}